\newtheorem{theorem}{Theorem}
\newtheorem{proposition}{Proposition}
\newtheorem{definition}{Definition} 
\newtheorem{proof}{Proof}
\newcommand{\R}{\mathbb{R}}
\newcommand{\X}{\mathbb{X}}
\newcommand*{\QEDA}{\hfill\ensuremath{\blacksquare}}
\DeclareMathOperator*{\argmax}{arg\,max}
\title{\LARGE
	Exploiting Weak Supermodularity\\ for Coalition-Proof Mechanisms\footnotetext{This research was gratefully funded by the European Union ERC Starting Grant CONENE.}\\
}
\author{Orcun Karaca\thanks{The authors are with the Automatic Control Laboratory, Department of Information Technology and Electrical Engineering, ETH Z\"{u}rich, Switzerland. e-mails: {\tt\small \{okaraca, mkamgar\}@control.ee.ethz.ch}} \and Maryam Kamgarpour\footnotemark[1]
}
\begin{document}
\maketitle
\begin{abstract}\noindent
 Under the incentive-compatible Vickrey-Clarke-Groves mechanism, coalitions of participants can influence the auction outcome to obtain higher collective profit.  These manipulations were proven to be eliminated if and only if the market objective is supermodular. Nevertheless, several auctions do not satisfy the stringent conditions for supermodularity. These auctions include electricity markets, which are the main motivation of our study. To characterize nonsupermodular functions, we introduce the supermodularity ratio and the weak supermodularity. We show that these concepts provide us with tight bounds on the profitability of collusion and shill bidding. We then derive an analytical lower bound on the supermodularity ratio. Our results are verified with case studies based on the IEEE test systems.

\end{abstract}

\section{Introduction}

Over the last couple of decades, electricity markets have been undergoing a rapid transformation from tightly regulated monopolies to deregulated competitive market structures~\cite{wilson2002architecture}. This restructuring has been essential to improve economic efficiency and attract new investments~\cite{cramton2017electricity}. Designing electricity markets is not a simple task since supply and demand of electricity need to be balanced in every instance of time. Specifically, high penetration of intermittent renewable energy sources presents challenges in maintaining this stability~\cite{bosesome}. Hence, there has been a surge of interest from the control community in studying various electricity markets \cite{roozbehani2010stability, tang2016model,orcun2018game,karaca2017game}.

This work studies a subset of the existing market mechanisms, conducted as reverse auctions. In these markets, generators submit their bids, and then an independent system operator determines the power allocation and the payment for each generator. The allocation rule is the economic dispatch which secures a reliable operation. Then, the central element of the market design is the payment rule, since the generators have incentives to strategize around~it. In~particular, the operator designs the payment rule to ensure that the generators reveal their true costs in order to achieve a stable grid with maximum social welfare~\cite{cramton2017electricity}.

Previous work on electricity markets considered the pay-as-bid~\cite{orcun2018game} and the locational marginal pricing mechanisms~\cite{wu1996folk}. In both mechanisms, generators can bid strategically to influence their profits since these mechanisms do not incentivize truthful bidding. Studies have shown that strategic manipulations have increased electricity prices substantially in these markets~\cite{wolfram1997strategic,joskow2001quantitative}. As an alternative, under the Vickrey-Clarke-Groves (VCG) mechanism, truthful bidding is the dominant-strategy Nash equilibrium~\cite{vickrey1961counterspeculation,clarke1971multipart,groves1973incentives}. As a result, every generator finds it more profitable to reveal their true costs, regardless of the bids of other generators. Due to this property, several recent works proposed the use of the VCG mechanism in electricity markets \cite{pgs,xu2017efficient}. 

Despite the desirable theoretical properties of the VCG mechanism, coalitions of generators can strategically bid to increase their collective utility. Hence, this mechanism is susceptible to collusion and shill bidding \cite{ausubel2006lovely}. Because the same market participants are involved in similar transactions day after day, electricity markets can be exposed to such manipulations~\cite{anderson2011implicit}. This is crucial since in a larger context the VCG~mechanism is not truthful.

As is outlined in auction literature \cite{ausubel2006lovely}, collusion and shill bidding occur when the VCG outcome is not in the~\textit{core}. The core is a concept from coalitional game theory where the participants have no incentive to leave the coalition of all participants. As is proven in~\cite{karaca2017game}, the VCG outcomes lie in the core if and only if the market objective is supermodular. Supermodularity can only be achieved in restricted settings, such as polymatroid constraints and convex bids. Consequently, electricity markets are generally not supermodular since technical rigidities result in complex constraint sets.

Our goal is to understand the properties of the VCG mechanism if the stringent conditions for supermodularity do not hold. To identify the bottlenecks caused by strategic manipulations, we aim to provide bounds on the profitability of collusion and shill bidding by defining the concept of weak supermodularity. For this concept, we are inspired by the growing literature on the submodularity ratio~\cite{bian2017guarantees,das2011submodular,elenberg2016restricted}.

Our contributions in this paper are as follows. First, we introduce a new way to characterize nonsupermodular set functions, that is, the supermodularity ratio. This ratio quantifies how close a function is to being supermodular. Second, we show that the supermodularity ratio provides us with tight bounds on the profitability of collusion and shill bidding under the VCG mechanism. Third, we derive an analytical lower bound on the supermodularity ratio of the electricity markets under consideration. Finally, we verify our results with case studies based on the IEEE test systems. The results derived on collusion and shill bidding apply to general auctions run by the VCG mechanism. To the best of our knowledge, this is the first work providing such bounds for VCG outcomes not lying in the core.

The remainder of this paper is organized as follows. 
In Section~\ref{sec:mech}, we introduce the electricity market and discuss the VCG mechanism. Section~\ref{sec:ws} defines the supermodularity ratio. First, we obtain bounds on collusion and shill bidding, see Theorem~\ref{thm:approx_collusions}. We then provide a lower bound on the supermodularity ratio of the markets, see Theorem~\ref{thm:lowbnd}. In Section~\ref{sec:num}, we present the numerical results.
\section{Mechanism framework}\label{sec:mech}

We start with a generic electricity market reverse auction. The set of participants consists of the central operator $l=0$ and the bidders $L=\{1,\ldots,\lvert L\rvert\}$. Let there be $t$ types of power supplies. These types can include control reserves (positive, negative, secondary, tertiary) and power injections differentiated by their locations, durations, and scheduled times. The same type of supplies from different bidders are perfect substitutes for the central operator. We assume that each bidder~$l$ has a private true cost~function $c_l: \X_l \rightarrow \mathbb R_+$, where $0\in \X_l\subseteq\R_+^t$ and~$c_l(0)=0$. This last assumption, $c_l(0)=0$, holds for many electricity markets, for instance, control reserve markets, and day-ahead markets that include generators' start-up costs\cite{abbaspourtorbati2016swiss,xu2017efficient}. Each bidder~$l$ then submits a bid function to the central operator, denoted by $b_l:\hat \X_l \rightarrow \mathbb R_+$, where $0\in \hat \X_l\subseteq\R_+^t$ and~$b_l(0)=0$. As is discussed in \cite[\S3.1]{karaca2017game}, these bid functions also capture the traditional multiple-item auction setup.

Given the bid profile $\mathcal{B}=\{b_l\}_{l\in L}$, \textit{a mechanism} defines an allocation rule $x_l^*(\mathcal{B})\in\hat \X_l$ and a payment rule $p_l(\mathcal{B})\in\R$ for each bidder $l$. 
In many electricity markets, the allocation is determined by the economic dispatch, that is, minimizing the aggregate cost subject to some constraints
\begin{equation}\label{eq:main_model}
\begin{split}
J(\mathcal{B})=&\min_{x\in\hat \X,y}\, \sum\limits_{l\in L} b_l(x_l) + d(x,y)\\
&\ \ \mathrm{s.t.}\ \ g(x,y)\leq 0.\\
\end{split}
\end{equation}
Here~$y\in\R^p$ are variables entering~\eqref{eq:main_model} in addition to the allocation~$x\in\hat \X=\prod_{l\in L} \hat \X_l$. The function~$d:\R^{t\rvert L\rvert}_+\times\R^p\rightarrow \R$ is an additional cost term. For example, in the case of a two-stage electricity market, $y$ corresponds to the second stage variables and $d$ is the second stage cost. The function $g:\R^{t\rvert L\rvert}_+\times\R^{p}\rightarrow \R^{q}$ defines the constraints.\footnote{Problem (\ref{eq:main_model}) defines a general class of markets, for example, energy~and~reserve markets\cite{xu2017efficient, carlson2012miso} and stochastic markets~\cite{abbaspourtorbati2016swiss,conejo2010decision}.} Finally, if the problem is infeasible, the objective is $J(\mathcal{B})=\infty$.

Let the optimal solution be denoted by $x^*(\mathcal{B})\in\hat \X$ and $y^*(\mathcal{B})\in\R^{p}_+$.\footnote{We assume that in case of multiple optima, there is some tie-breaking rule according to a predetermined fixed ordering of the bidders.} The \textit{utility} of bidder $l$ is $$u_l(\mathcal{B})=p_l(\mathcal{B})-c_l(x^*_l(\mathcal{B})).$$ A bidder whose bid is not accepted is not paid and $u_l(\mathcal{B})=0$. Then, the total payment of the central operator is given by  $$u_0(\mathcal{B})=-\sum_{l\in L} p_l(\mathcal{B}) - d(x^*(\mathcal{B}),y^*(\mathcal{B})),$$
which defines the utility of the central operator.  Note that this payment can be an expected value when the function $d$ is an expected second stage cost.

Three fundamental properties we desire in mechanism design are individual rationality, efficiency and dominant-strategy incentive-compatibility. A mechanism is \textit{individually rational} if bidders do not face negative utilities, $u_l(\mathcal{B})\geq 0$, $\forall l\in L$.\footnote{It is also often referred to as voluntary participation or cost recovery.} A mechanism is \textit{efficient} if the sum of all utilities $\sum_{l=0}^{\lvert L\rvert} u_l(\mathcal{B})$ is maximized. This property is attained if we are solving for the optimal allocation of the market in~\eqref{eq:main_model} under the condition that the bidders submitted their true costs.

To define dominant-strategy incentive-compatibility, we first bring in tools from game theory. Let $\mathcal{B}_{-l}$ denote the bid profile of all the bidders, except bidder $l$.
The bid profile $\mathcal{B}$ is a \textit{Nash equilibrium} if for every bidder $l$, $u_l(\mathcal{B}_l,\mathcal{B}_{-l})\geq u_l(\tilde{\mathcal{B}}_l,\mathcal{B}_{-l})$, $\forall\tilde{\mathcal{B}}_l$.
The bid profile $\mathcal{B}$ is a \textit{dominant-strategy Nash equilibrium} if for every bidder $l$,
$u_l(\mathcal{B}_l,\hat{\mathcal{B}}_{-l})\geq u_l(\tilde{\mathcal{B}}_l,\hat{\mathcal{B}}_{-l})$, $\forall\tilde{\mathcal{B}}_l$, $\forall \hat{\mathcal{B}}_{-l}$. Finally, a mechanism is  \textit{dominant-strategy incentive-compatible} (DSIC) if the truthful bid profile $\mathcal C=\{c_l\}_{l\in L}$ is the dominant-strategy Nash equilibrium. In other words, every bidder finds it more profitable to bid truthfully,~regardless of what others bid.

\subsection{Payment rules}

The design of the payment rule plays a crucial role in attaining the desirable properties above.
We first consider two prominent payment rules widely used for the electricity markets, that is, the pay-as-bid mechanism and the locational marginal pricing (LMP) mechanism. 

Under the pay-as-bid mechanism, a rational bidder would overbid to ensure positive utility. There exist many Nash equilibria, none of which are incentive-compatible \cite{orcun2018game}. Under the LMP mechanism,  strategic manipulations become more complex than the case of the pay-as-bid mechanism. A~bidder can maximize its utility by both inflating the bids and withholding its maximum supply~\cite{ausubel2014demand}. Furthermore, an equilibrium may not even exist~\cite{tang2013nash}. In summary, none of these payment rules satisfy the properties of efficiency and incentive-compatibility. 

The \textit{Vickrey-Clarke-Groves (VCG) mechanism} is characterized by the payment rule, $
p_l(\mathcal{B})=b_l(x^*_l(\mathcal{B}))+(h(\mathcal{B}_{-l})-J(\mathcal{B}))$.
The function $h(\mathcal{B}_{-l})\in\R$ must be carefully chosen to ensure individual rationality. We use the \textit{Clarke pivot rule} $
h(\mathcal{B}_{-l})=J(\mathcal{B}_{-l}),  $
where $J(\mathcal{B}_{-l})$ denotes the minimum total cost without bidder $l$, that is, the optimal value of the optimization problem in (\ref{eq:main_model}) with $x_l=0$.\footnote{This rule generates the minimum total payment by the central operator  while ensuring individual rationality of the bidders\cite[Theorem~1]{krishna1998}.} Note that this mechanism is well-defined if a feasible solution exists when a bidder is removed. This is not restrictive in the presence of many bidders and a second-stage market. 

Our previous work in \cite{karaca2017game} shows that the VCG mechanism satisfies all three fundamental properties for the model introduced in~\eqref{eq:main_model}. This result is a generalization of \cite{vickrey1961counterspeculation,clarke1971multipart,groves1973incentives} which do not consider continuous goods, second stage cost and general constraints. Dominant-strategy incentive-compatibility makes it easier for bidders to enter the auction, without spending resources in computing optimal bidding strategies.

Despite these remarkable theoretical virtues, the VCG mechanism can suffer from collusion and shill bidding~\cite{ausubel2006lovely}. Bidders $K\subseteq L$ are \textit{colluders} if they obtain higher collective utility by changing their bids from $\mathcal C_K=\{c_l\}_{l\in K}$ to $\mathcal{B}_K=\{b_l\}_{l\in K}$. A bidder $l$ is a \textit{shill bidder} if there exists a set~$S$ and bids $\mathcal{B}_S=\{b_k\}_{k\in S}$ such that the bidder~$l$ finds participating with multiple bids $\mathcal{B}_S$ more profitable than participating with a single truthful bid $\mathcal{C}_l$. These shortcomings are illustrated by electricity market examples in \cite{pgs,karaca2017game}.

This observation motivates us to define \textit{coalition-proofness}. By coalition-proof, we mean that a group of bidders, who lose when bidding their true cost, cannot profit from collusion, and no bidder can profit from bidding with multiple identities. We remark that it is not possible to achieve immunity to collusion from all sets of bidders, see also the examples in~\cite{beck2009revenue}. For instance, no mechanism can eliminate the case where all bidders inflate their bid prices simultaneously. 

\subsection{Core as a coalition-proof outcome}
In coalitional game theory, the \textit{core} defines the set of utilities that cannot be improved upon by forming coalitions.\footnote{The utility allocation and the auction outcome are used interchangeably.} We discuss how this concept coincides with coalition-proofness.

For every $S\subseteq L$, let $J(\mathcal{B}_S)$ be the optimal value of~(\ref{eq:main_model}) with $x_{-S}=0$,
where the stacked vector $x_{-S}\in\R_+^{t(\lvert L\rvert -\lvert S\rvert)}$ is defined by omitting the subvectors from~$S$. In particular, it is defined by the following expression:
\begin{equation*}
\begin{split}
J(\mathcal{B}_S) =  &\min_{x,y}\sum\limits_{l\in S} b_l(x_l) + d(x,y)\\
&\ \mathrm{  s.t. } \ \,g(x,y)\leq 0,\, x_{-S}=0.
\end{split}
\end{equation*}Note that, this function is nonincreasing, $J(\mathcal{B}_R) \leq J(\mathcal{B}_S)$, for~$S\subseteq R$. We refer to $J$ as the market objective.
Then, the core $Core(\mathcal{C})\in\R\times\R^{\rvert L\rvert}_+$ is defined as
	\begin{equation}\label{eq:mcoredef}
	\begin{split}
	\Big\{u\in\R\times\R^{\rvert L\rvert}_+ \,\rvert\, &u_0+\sum\limits_{l\in L} u_l=-J(\mathcal{C}),\\
	&u_0+\sum\limits_{l\in S}u_l\geq-J(\mathcal{C}_S),\, \forall S \subset L \Big\}.
		\end{split}
	\end{equation}
The core is always nonempty in auctions because the outcome $u_0=-J(\mathcal{C})$, $u_l=0$ for all $l\in L$ always lies in the core. Core outcomes are individually rational since they are restricted to the nonnegative orthant for the bidders.  The equality constraint in \eqref{eq:mcoredef} implies that the outcomes are efficient since the term on the right is maximized.
We say that an outcome is unblocked if no set of bidders can make a deal with the operator from which everyone can benefit. This condition is ensured by the inequality constraints in the definition of the core.

As is outlined in auction literature \cite{ausubel2006lovely} and extended to the market setup \eqref{eq:main_model} in \cite[Theorem~3]{karaca2017game}, the VCG mechanism is coalition-proof if and only if the VCG outcomes lie in the core. Hence, past work considers characterizing bid curves and constraints such that the VCG outcome is guaranteed to lie in the core. 
To state these results and extend beyond them, we bring in the definition of supermodularity. For the following definitions, we use $J(S)$ instead of $J(\mathcal{B}_S)$ since the properties are required to hold under any bid profile $\mathcal{B}$. We denote the set-theoretic difference $S\setminus \{l\}$ by $S_{-l}$.

\begin{definition}\label{def:sup}
	A function $J:2^{L}\rightarrow\R$ is \textit{supermodular} if  
	$J(S)-J(S_{-l})\leq J(R)-J(R_{-l})$
	for all coalitions $S\subseteq R\subseteq L$ and for each bidder $l\in S$. A function $\hat J:2^{L}\rightarrow\R$ is \textit{submodular} if and only if $-\hat J$ is supermodular.
\end{definition}

For any set of bids, the VCG outcome is in the core if and only if the market objective~$J$ is supermodular~\cite[Theorem~2]{karaca2017game}. Supermodularity of problem~\eqref{eq:main_model} is a strong condition which can only be achieved in restricted settings, such as polymatroid-type constraints and convex bids~\cite[Theorem~5]{karaca2017game}. Though the convex bid assumption may be reasonable in certain markets, the polymatroid constraint requirement is stringent. In particular, DC-OPF problems involve polytopic constraints and a polytope is in general not a polymatroid. In such instances, the market objective is generally not supermodular. As a result, the VCG mechanism suffers from collusion and shill bidding. To the best of our knowledge,  there does not exist any result bounding the profitability of such manipulations, in case the objective function is not supermodular. Therefore, we aim to quantify the coalition-proofness property of the VCG mechanism under more general bid functions and constraints. As a remark, all the following results hold for any general auction over continuous (or discrete) goods with complex constraints.

\section{Approximating Coalition-Proofness}\label{sec:ws}

For the results in this section, we are inspired by the growing literature on the submodularity ratio~\cite{bian2017guarantees,das2011submodular}. We first define the supermodularity ratio to quantify how close a set function is to being supermodular. 
Using the supermodularity ratio, we then introduce the weak supermodularity condition for the market objective in problem~\eqref{eq:main_model}. Finally, we show that this condition indeed provides us with an approximate coalition-proofness certificate for the VCG mechanism.

\begin{definition}\label{def:suprat}
	The \textit{supermodularity ratio} of a nonnegative set function $J:2^{L}\rightarrow\R_+$ is the largest $\gamma_{\text{sup}}$ such that
	\begin{equation*}
	\gamma_{\text{sup}}\Big[\sum_{l \in K }  J(S_{-l}) - J(S)\Big]\leq J(S_{-K}) - J(S),\, \forall K,S\subseteq L.
	\end{equation*}
\end{definition}

In literature, submodularity ratio is defined to quantify how close a nonnegative set function is to being submodular~\cite{das2011submodular,bian2017guarantees}. Since the function $J$ is supermodular if and only if $-J$ is submodular, it may seem natural to use the submodularity ratio to describe how close $-J$ is to being submodular. However, submodularity ratio is only defined over nonnegative set functions. Even if we allow for nonpositive functions, it does not provide any useful information for our purposes. In addition, this ratio can be zero in the cases where the supermodularity ratio is positive. As an alternative to the submodularity ratio, previous studies also discuss curvature to quantify how close a nondecreasing set function is to being supermodular. However, the objective in \eqref{eq:main_model} is nonincreasing and curvature is unbounded for nonincreasing set functions~\cite{bian2017guarantees}.

Motivated by the discussion above, we derive some important observations for the supermodularity ratio of nonincreasing set functions.

\vspace{.1cm}
\begin{proposition}\label{rem}
	The following statements hold for a nonincreasing set function: \begin{itemize}
		\item[(i)]  $\gamma_{\text{sup}}\in[0,\,1]$, 
		\item[(ii)] this function is supermodular if and only if $\gamma_{\text{sup}}=1$.  
	\end{itemize}
\end{proposition}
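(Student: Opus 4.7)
The plan is to start from the observation that, since $J$ is nonincreasing, $S_{-l}\subseteq S$ and $S_{-K}\subseteq S$ give $J(S_{-l})-J(S)\geq 0$ and $J(S_{-K})-J(S)\geq 0$ for every $l$, $K$, and $S$. Hence both sides of the inequality in Definition~\ref{def:suprat} are nonnegative, so $\gamma_{\text{sup}}=0$ is always feasible; this yields $\gamma_{\text{sup}}\geq 0$. For the upper bound, I would specialize to $K=\{l\}$ with $l\in S$: both sides collapse to $J(S_{-l})-J(S)$, so the inequality reduces to $\gamma_{\text{sup}}\bigl[J(S_{-l})-J(S)\bigr]\leq J(S_{-l})-J(S)$. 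Whenever this common difference is strictly positive for some pair $(S,l)$, this forces $\gamma_{\text{sup}}\leq 1$. (In the degenerate case where $J$ is constant, both sides vanish for every $(K,S)$, and one takes $\gamma_{\text{sup}}=1$ by convention.) This settles~(i).

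For the forward direction of~(ii), assume $J$ is supermodular, enumerate $K=\{l_1,\dots,l_k\}$, and set $S^{(0)}=S$ and $S^{(i)}=S\setminus\{l_1,\dots,l_i\}$ for $i\geq 1$, so that $S^{(k)}=S_{-K}$. Telescoping yields
\[ J(S_{-K})-J(S) \;=\; \sum_{i=1}^{k}\bigl[J(S^{(i-1)}_{-l_i})-J(S^{(i-1)})\bigr]. \]
Since $S^{(i-1)}\subseteq S$ and $l_i\in S^{(i-1)}$, Definition~\ref{def:sup} applied to the pair $(S^{(i-1)},S)$ with element $l_i$ gives $J(S^{(i-1)}_{-l_i})-J(S^{(i-1)})\geq J(S_{-l_i})-J(S)$. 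Summing over $i$ delivers the defining inequality of Definition~\ref{def:suprat} with constant~$1$, so $\gamma_{\text{sup}}\geq 1$; combined with~(i) this yields $\gamma_{\text{sup}}=1$.

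Conversely, suppose $\gamma_{\text{sup}}=1$. Specialize to $K=\{l,m\}$ with $l,m\in S$ distinct; the defining inequality becomes
\[ \bigl[J(S_{-l})-J(S)\bigr] + \bigl[J(S_{-m})-J(S)\bigr] \;\leq\; J(S\setminus\{l,m\})-J(S), \]
which rearranges to $J(S)-J(S_{-l})\geq J(S_{-m})-J((S_{-m})_{-l})$. Setting $R=S$ and $T=S_{-m}$, this is exactly the supermodularity inequality for $T\subseteq R$ with $|R\setminus T|=1$ and $l\in T$. The general statement of Definition~\ref{def:sup} then follows by chaining this single-element-difference version along any maximal chain from $T$ to $R$, a standard induction on $|R\setminus T|$.

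The main subtlety is the converse direction of~(ii): one has to notice that testing only $|K|=2$ in Definition~\ref{def:suprat} already recovers the adjacent-element form of supermodularity, from which the general form is a routine induction. Every other step is either a direct consequence of the nonincreasingness of $J$ or a clean telescoping along a chosen ordering of the set $K$.
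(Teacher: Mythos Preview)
Your proof is correct and follows essentially the same approach as the paper: part~(i) is identical, the forward direction of~(ii) uses the same telescoping along an enumeration of $K$, and the converse hinges on the same key observation that the $|K|=2$ instances of Definition~\ref{def:suprat} encode precisely the adjacent-set form of supermodularity, from which the general form follows by chaining. The only cosmetic difference is that the paper argues the converse by contradiction (showing a failure of supermodularity at adjacent sets yields a violation of $\gamma_{\text{sup}}=1$), whereas you argue it directly.
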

\vspace{.1cm}

The proof of the proposition is relegated to the appendix.
We define a set function to be  \textit{weakly supermodular} if $\gamma_{\text{sup}}>0$.

Next, we show that the weak supermodularity provides us with bounds on both shill bidding and collusion, hence, we achieve approximate coalition-proofness.
\vspace{.2cm}

\begin{theorem}
	\label{thm:approx_collusions}
	For the bidders $L$, consider a~VCG~mechanism modeled by \eqref{eq:main_model}. If the market objective~$J$ is weakly supermodular,~then, 
	\begin{itemize}
		\item[(i)] {A subset of bidders $K\subseteq L$ who lose when bidding their true values cannot profit more than \begin{equation*}
				[\gamma_{\text{sup}}^{-1}-1][J(\mathcal{C})-J(\mathcal{C}_{-K},\mathcal{B}_{K})],
			\end{equation*} by a joint deviation $\mathcal{B}_{K}$. For any joint deviation $\mathcal{B}_{K}$, this total profit is upper bounded by \begin{equation*} [\gamma_{\text{sup}}^{-1}-1][J(\mathcal{C})-J(\mathcal{C}_{-K},\mathcal{B}_{K}^0)],\end{equation*} where the bid profile $\mathcal{B}_{K}^0$ is the set of bids $b_l(x_l)=0,\,\forall x_l\in\X_l,\, \forall l\in K$.}
		\item[(ii)] For any bidder~$l$, the profit from bidding with the set of bids $\mathcal{B}_S$ is at most $$[\gamma_{\text{sup}}^{-1}-1][J(\mathcal{C}_{-l})-J(\mathcal{C}_{-l},\mathcal{B}_{S})],$$ 
		more than the profit from a single truthful bid $\mathcal{C}_l$. For any set of bids $\mathcal{B}_S$, this value is upper bounded by
		$$[\gamma_{\text{sup}}^{-1}-1][J(\mathcal{C}_{-l})-J(\mathcal{C}_{-l},\mathcal{B}_{l}^{0})],$$ where the bid $\mathcal{B}_{l}^0$ is given by $b_l(x_l)=0,\,\forall x_l\in\X_l$.\end{itemize}
\end{theorem}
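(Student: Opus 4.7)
The plan is to prove both parts by the same three-step recipe: (a)~expand the colluders' or shills' collective utility using the Clarke-pivot VCG payment; (b)~control the resulting ``bid minus true cost'' sum by invoking optimality of the truthful dispatch; and (c)~use Definition~\ref{def:suprat} to contract the sum of marginal terms $\sum[J(\cdot_{-l})-J(\cdot)]$ into a single marginal $J(\cdot_{-K})-J(\cdot)$.

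For Part~(i), set $\tilde{\mathcal{B}}=(\mathcal{C}_{-K},\mathcal{B}_{K})$ and let $(x^{*},y^{*})$ denote its optimal dispatch. Substituting the VCG payment gives $u_l = b_l(x^{*}_l)-c_l(x^{*}_l)+J(\tilde{\mathcal{B}}_{L_{-l}})-J(\tilde{\mathcal{B}})$, which I sum over $l\in K$. For step (b), replacing the bids $b_l$ by $c_l$ for $l\in K$ inside the expression for $J(\tilde{\mathcal{B}})$ yields the objective of~\eqref{eq:main_model} with truthful bids evaluated at the feasible point $(x^{*},y^{*})$; this value is at least $J(\mathcal{C})$, which rearranges to $\sum_{l\in K}[b_l(x^{*}_l)-c_l(x^{*}_l)] \le J(\tilde{\mathcal{B}})-J(\mathcal{C})$. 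For step (c), Definition~\ref{def:suprat} applied to the set function $S\mapsto J(\tilde{\mathcal{B}}_S)$ with outer set $L$ and inner set $K$ gives $\sum_{l\in K}[J(\tilde{\mathcal{B}}_{L_{-l}})-J(\tilde{\mathcal{B}})] \le \gamma_{\text{sup}}^{-1}[J(\tilde{\mathcal{B}}_{L_{-K}})-J(\tilde{\mathcal{B}})] = \gamma_{\text{sup}}^{-1}[J(\mathcal{C}_{-K})-J(\tilde{\mathcal{B}})]$. Since the bidders in $K$ lose under $\mathcal{C}$ we have $J(\mathcal{C}_{-K})=J(\mathcal{C})$; adding the two bounds collapses to $\sum_{l\in K}u_l \le [\gamma_{\text{sup}}^{-1}-1][J(\mathcal{C})-J(\tilde{\mathcal{B}})]$, as stated. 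The $\mathcal{B}_{K}^{0}$-uniform bound then follows from pointwise monotonicity of $J$ in the nonnegative bid functions: identically-zero bids minimize the optimization objective, so $J(\mathcal{C}_{-K},\mathcal{B}_{K}^{0}) \le J(\tilde{\mathcal{B}})$ for any $\mathcal{B}_K$.

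For Part~(ii), the same recipe applies with the shill set $S$ playing the role of $K$ in the expanded auction with bidders $L'=L_{-l}\cup S$ and bid profile $\tilde{\mathcal{B}}=(\mathcal{C}_{-l},\mathcal{B}_{S})$. After summing the shills' utilities and netting out bidder $l$'s true cost $c_l$ at the aggregate shill allocation, step (b) becomes $\sum_{k\in S}b_k(x^{*}_k)-c_l(x^{*}_S) \le J(\tilde{\mathcal{B}})-J(\mathcal{C})$ via the same feasibility-plus-optimality swap, and step (c) gives $\sum_{k\in S}[J(\tilde{\mathcal{B}}_{L'_{-k}})-J(\tilde{\mathcal{B}})] \le \gamma_{\text{sup}}^{-1}[J(\mathcal{C}_{-l})-J(\tilde{\mathcal{B}})]$. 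Subtracting the truthful-bid utility $J(\mathcal{C}_{-l})-J(\mathcal{C})$ from the resulting shill-profit upper bound yields exactly $[\gamma_{\text{sup}}^{-1}-1][J(\mathcal{C}_{-l})-J(\tilde{\mathcal{B}})]$. Monotonicity in bids again delivers the $\mathcal{B}_{l}^{0}$-uniform upper bound.

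The main obstacle I anticipate is step (b). For Part~(i) it is a clean feasibility swap, but for Part~(ii) it requires a careful accounting of how bidder $l$'s true cost $c_l$---defined on the single domain $\X_l$---relates to the aggregate of allocations served through the shill identities in $S$; this is where one must lean on the modelling convention that the shills split the same underlying supply of bidder~$l$. Once those two inequalities are secured, the supermodularity-ratio contraction and the final algebra are routine.
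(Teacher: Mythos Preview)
Your proposal is correct and follows essentially the same route as the paper's proof: expand the VCG utilities, apply the supermodularity-ratio inequality to collapse $\sum_{l\in K}[J(\cdot_{-l})-J(\cdot)]$ into $\gamma_{\text{sup}}^{-1}[J(\cdot_{-K})-J(\cdot)]$, use the losers' property $J(\mathcal{C}_{-K})=J(\mathcal{C})$, and compare the deviated objective with the truthful optimum via feasibility of $(x^{*},y^{*})$. The only cosmetic difference is in Part~(ii): the paper introduces a merged bid $\tilde b_l(x_l)=\min\{\sum_{k\in S}b_k(x_k):\sum_k x_k=x_l\}$ and invokes DSIC to bound $u_l^{\mathrm{VCG}}(\tilde{\mathcal B})\le u_l^{\mathrm{VCG}}(\mathcal C)$, whereas you carry out the feasibility swap $\sum_{k\in S}b_k(x_k^{*})-c_l(x_S^{*})\le J(\tilde{\mathcal B})-J(\mathcal C)$ directly; unpacking the DSIC inequality shows these are the same statement. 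Your flagged obstacle---that the aggregate shill allocation $\sum_{k\in S}x_k^{*}$ must be a feasible argument for $c_l$ and for the constraint map $g$---is exactly the assumption the paper invokes (``the functions $g$ and $d$ in fact depend on $\sum_{l\in L}x_l$''), so you have identified the one genuine modelling requirement correctly.
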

\vspace{.2cm}

\begin{proof}
		(i) Let $K$ be a set of colluders who would lose the auction when bidding their true values $\mathcal{C}_K=\{c_l\}_{l\in K}$, while bidding $\mathcal{B}_K=\{b_l\}_{l\in K}$ they become winners, that is, they are all allocated a positive quantity. We then define $\mathcal{C}=(\mathcal{C}_{-K},\mathcal{C}_{K})$ and ${\mathcal{B}}=(\mathcal{C}_{-K},{\mathcal{B}}_{ K})$ where $\mathcal{C}_{-K}=\{c_l\}_{l\in L\setminus K}$ denotes the bidding profile of the remaining bidders. The profile $\mathcal{C}_{-K}$ is not necessarily a truthful profile. The total VCG utility that colluders receive under~${\mathcal{B}}$, $\sum_{l\in K}{u}_l^{\text{VCG}}({\mathcal{B}})$, is
	\begin{align*} &\hspace{-0.05cm}= \sum_{l\in K}J({\mathcal{B}_{-l}})-J({\mathcal{B}})+b_l(x_l^*({\mathcal{B}}))-{c}_l(x_l^*({\mathcal{B}}))\\
&\hspace{-0.05cm}\leq \gamma_{\text{sup}}^{-1}[J({\mathcal{B}}_{-K} ) - J({\mathcal{B}})]+\sum_{l\in K} b_l(x_l^*({\mathcal{B}}))-{c}_l(x_l^*({\mathcal{B}}))\\ 
&\hspace{-0.05cm} = [\gamma_{\text{sup}}^{-1}-1][J(\mathcal{C})-J(\mathcal{B})]+J({\mathcal{C}})- \Big[\sum_{l\in L\setminus K} c_l(x_l^*({\mathcal{B}}))  +\sum_{l\in K} b_l(x_l^*({\mathcal{B}})) +d(x^*({\mathcal{B}}),y^*({\mathcal{B}}))\Big]\\&\hspace{10.5cm}+\Big[\sum_{l\in K} b_l(x_l^*({\mathcal{B}}))-{c}_l(x_l^*({\mathcal{B}}))\Big] \\
&\hspace{-0.05cm} = [\gamma_{\text{sup}}^{-1}-1][J(\mathcal{C})-J(\mathcal{B})]+J({\mathcal{C}}) - \Big[\sum_{l\in L} {c}_l(x_l^*({\mathcal{B}}))+d(x^*({\mathcal{B}}),y^*({\mathcal{B}}))\Big] \\ &\hspace{-0.05cm}\leq [\gamma_{\text{sup}}^{-1}-1][J(\mathcal{C})-J(\mathcal{B})]\\
&\hspace{-0.05cm}\leq [\gamma_{\text{sup}}^{-1}-1][J(\mathcal{C})-J(\mathcal{C}_{-K},\mathcal{B}_{K}^0)].
\end{align*}
The first equality follows from the VCG payments. {The first inequality follows from the weak supermodularity.} The second equality comes from the fact that $K$ was a group of losers, {so $J(\mathcal{B}_{-K})=J(\mathcal{C}_{-K})=J({\mathcal{C}})$}. We also add and subtract the term $[J(\mathcal{C})-J(\mathcal{B})]$.
After substituting these, we see that the term in brackets is the cost of ${\mathcal{C}}$ but evaluated at a feasible suboptimal allocation $(x^*(\mathcal{B}),y^*(\mathcal{B}))$. Then, the second inequality follows from the fact that the term in the brackets is lower bounded by $J(\mathcal{C})$. This yields the first statement. Finally, the third inequality is obtained from $J(\mathcal{C}_{-K},\mathcal{B}_{K}^0)\leq J(\mathcal{B})$. This holds since $x^*(\mathcal{B})$ is a feasible suboptimal solution to $J(\mathcal{C}_{-K},\mathcal{B}_{K}^0)$. This concludes the proof of part (i).
	
	(ii) Similar to part (i), define $\mathcal{C}=(\mathcal{C}_{-l},\mathcal{C}_{l})$. The profile $\mathcal{C}_{-l}$ is not necessarily a truthful profile. Shill bids of bidder~$l$ are given by $\mathcal{B}_{S}=\{b_k\}_{k\in S}$.~We define a merged bid $\tilde{\mathcal{B}}_l$ as $$\tilde{b}_l(x_l)=\min_{x_k\in\hat\X_k,\,\forall k}\, \sum_{k\in S}b_k(x_k)\ \mathrm{s.t. }\sum_{k\in S}x_k=x_l.$$ We then define ${\tilde{\mathcal{B}}}=(\mathcal{C}_{-l},{\tilde{\mathcal{B}}}_{l})$.
	The VCG utility obtained from shill bidding under ${\mathcal{B}}=(\mathcal{C}_{-l},{\mathcal{B}}_{S})$, $\sum_{k\in S}{u}_k^{\text{VCG}}({\mathcal{B}})$, is
	\begin{align*} &= \sum_{k\in S}[J({\mathcal{B}_{-k}})-J({\mathcal{B}})+b_k(x_k^*({\mathcal{B}}))]-{c}_l(\sum_{k\in S}x_k^*({\mathcal{B}}))\\
	&\leq \gamma_{\text{sup}}^{-1}[J({\mathcal{B}}_{-S} ) - J({\mathcal{B}})]+\sum_{k\in S} b_k(x_k^*({\mathcal{B}}))-{c}_l(\sum_{k\in S}x_k^*({\mathcal{B}}))\\ 
	&= \gamma_{\text{sup}}^{-1}[J({\mathcal{C}}_{-l} ) - J({\tilde{\mathcal{B}}})]+\tilde{b}_l(\sum_{k\in S}x_k^*({\mathcal{B}}))-{c}_l(\sum_{k\in S}x_k^*({\mathcal{B}}))\\ 
	&= [\gamma_{\text{sup}}^{-1}-1][J({\mathcal{C}}_{-l}) - J({\tilde{\mathcal{B}}})]+ {u}_l^{\text{VCG}}({\tilde{\mathcal{B}}}) \\ 
	&\leq [\gamma_{\text{sup}}^{-1}-1][J({\mathcal{C}}_{-l}) - J(\mathcal{C}_{-l},\mathcal{B}_{S})]+ {u}_l^{\text{VCG}}({\mathcal{C}})\\
	&\leq [\gamma_{\text{sup}}^{-1}-1][J({\mathcal{C}}_{-l} ) - J(\mathcal{C}_{-l},\mathcal{B}_{l}^{0})]+ {u}_l^{\text{VCG}}({\mathcal{C}}). 
	\end{align*}
	The first inequality follows from the weak supermodularity of~$J$. The second equality holds since we have $J({\tilde{\mathcal{B}}})=J({{\mathcal{B}}})$. This follows from the definition of the merged bid and the following implication. Since the same type of supplies are perfect substitutes for the central operator, the functions $g$ and $d$ in fact depend on $\sum_{l\in L} x_l$. 
	 The third equality follows from adding and subtracting the term $[J({\mathcal{C}}_{-l} ) - J({\tilde{\mathcal{B}}})]$. The second inequality is the DSIC property of the VCG mechanism. This yields the first statement. Finally, the third inequality is obtained from $J(\mathcal{C}_{-l},\mathcal{B}_{l}^{0})\leq J({\tilde{\mathcal{B}}})=J(\mathcal{C}_{-l},\mathcal{B}_{S})$. This holds since $x^*({\tilde{\mathcal{B}}})$ is a feasible suboptimal solution to the problem defined by $J(\mathcal{C}_{-l},\mathcal{B}_{l}^{0})$. This concludes the proof of part (ii).  \QEDA
\end{proof}

Theorem~\ref{thm:approx_collusions} provides the first bounds on the profitability of collusion and shill bidding for the cases when the outcome of the VCG mechanism is not in the core. In summary, as the supermodularity ratio gets larger, the function $J$ gets closer to being supermodular, and we obtain tighter bounds on the profitability of collusion and shill bidding. Furthermore, if $\gamma_{\text{sup}}=1$, we obtain exact coalition-proofness. As is proven in Proposition~\ref{rem}-(ii), this result coincides with the one on supermodularity in \cite[Theorem~3]{karaca2017game}. 
In practice, achieving the profit in Theorem~\ref{thm:approx_collusions} may still be difficult since in general, the bidders need full information on the market constraints and other bidders to collude optimally.

Following from these discussions, a natural question is whether the function $J$ in \eqref{eq:main_model} satisfies weak supermodularity. 

\begin{theorem}\label{thm:lowbnd}
	The market objective $J$ in~\eqref{eq:main_model} is weakly supermodular. Its supermodularity ratio satisfies $\gamma_{\text{sup}}\geq 1/k_{\text{feas}} >0$, where $k_{\text{feas}}$ is the maximum number of bidders that can be removed while ensuring the feasibility of~\eqref{eq:main_model}.
\end{theorem}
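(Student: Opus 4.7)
The plan is to verify the supermodularity-ratio inequality from Definition~\ref{def:suprat} directly with $\gamma_{\text{sup}} = 1/k_{\text{feas}}$, i.e.\ to show that for all $K,S\subseteq L$,
\[
\sum_{l\in K}\bigl[J(S_{-l}) - J(S)\bigr] \;\leq\; k_{\text{feas}}\bigl[J(S_{-K}) - J(S)\bigr].
\]
Both sides are nonnegative because, as noted right after the definition of the market objective, $J$ is nonincreasing in its set argument (adding bidders weakly improves the minimization). Also, without loss of generality I can assume $K\subseteq S$: for any $l\in K\setminus S$ one has $S_{-l}=S$, so those indices contribute zero to the left-hand side, while $S_{-K} = S_{-(K\cap S)}$ makes the right-hand side unchanged.

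Next I separate the argument by feasibility. If $J(S_{-K}) = \infty$, the right-hand side is infinite and the inequality is immediate, so assume $J(S_{-K})<\infty$. Because $K\subseteq S$, the configuration $S_{-K}$ corresponds to removing $|L\setminus S|+|K|$ bidders from the grand coalition $L$; finiteness therefore forces $|L\setminus S|+|K|\leq k_{\text{feas}}$, and in particular $|K|\leq k_{\text{feas}}$. The finite values $J(S_{-l})$ for $l\in K$ are handled similarly (removing a strict subset preserves feasibility).

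The key step is a monotonicity observation: for each $l\in K$, the inclusion $S_{-K}\subseteq S_{-l}$ combined with the nonincreasingness of $J$ gives $J(S_{-l}) \leq J(S_{-K})$. Summing over $l\in K$ and subtracting $|K|J(S)$ from both sides,
\[
\sum_{l\in K}\bigl[J(S_{-l}) - J(S)\bigr] \;\leq\; |K|\bigl[J(S_{-K}) - J(S)\bigr] \;\leq\; k_{\text{feas}}\bigl[J(S_{-K}) - J(S)\bigr],
\]
where the last step uses $|K|\leq k_{\text{feas}}$ from the feasibility discussion and the fact that $J(S_{-K})-J(S)\geq 0$. This delivers $\gamma_{\text{sup}}\geq 1/k_{\text{feas}}$, and the strict positivity $k_{\text{feas}}>0$ is guaranteed by the standing assumption in Section~\ref{sec:mech} that a feasible solution exists when a bidder is removed.

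I do not anticipate a real obstacle here, since the argument is a pigeonhole-style use of monotonicity; the only subtlety worth stating carefully is the reduction to $K\subseteq S$ and the use of $k_{\text{feas}}$ to bound $|K|$ exactly in the regime where the right-hand side is finite. If one wished to sharpen the constant, the natural refinement would be to replace $k_{\text{feas}}$ by the number of bidders one is actually allowed to drop from the specific set $S$ under consideration, but the uniform bound $k_{\text{feas}}$ is sufficient for the theorem as stated.
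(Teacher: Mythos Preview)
Your proof is correct and follows essentially the same approach as the paper: the key step in both is the monotonicity inequality $J(S_{-l}) \leq J(S_{-K})$ for $l\in K$, which yields $\sum_{l\in K}[J(S_{-l})-J(S)] \leq |K|[J(S_{-K})-J(S)]$, followed by $|K|\leq k_{\text{feas}}$ via the feasibility restriction. Your presentation is in fact a bit cleaner---you work directly with the inequality form and make the WLOG reduction to $K\subseteq S$ explicit, whereas the paper phrases it as a minimization of the ratio, splits into four cases, and routes the bound through an auxiliary maximizer $w\in K$; but the mathematical content is the same.
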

\begin{proof}
	For supermodularity ratio, we aim to derive the tightest lower bound on $$\dfrac{J(S_{-K}) - J(S)}{\sum_{l \in K }  J(S_{-l}) - J(S)},$$ for all $K,S\subseteq L$. There are four possible cases to consider.\\
	
		(1) If $J(S_{-K}) - J(S)=0$ and $\sum_{l \in K }  J(S_{-l}) - J(S)=0$, then, for this instance, the supermodularity ratio is $1$.\\
		
		(2) The case where we have ${J(S_{-K}) - J(S)>0}$ and ${\sum_{l \in K }  J(S_{-l}) - J(S)=0}$ can be ignored, since~we are looking for a lower bound.\\
		
		(3) Consider the case of ${\sum_{l \in K }  J(S_{-l}) - J(S)>0}$ but ${J(S_{-K}) - J(S)=0}$. We show that this can never be the case for the model in \eqref{eq:main_model}. Since the market objective is nonincreasing we have $J(R) \leq J(S)$ for $S\subseteq R$. Then, $$J(S_{-K})\geq J(S_{-l})\geq  J(S),$$ for all $l \in K$ and $K,S\subseteq L$. As a result, we obtain $J(S_{-K})\geq J(S_{-l})\geq  J(S_{-K})$, or equivalently $J(S_{-K})=J(S_{-l})$. This observation yields $$\sum_{l \in K }  J(S_{-l}) - J(S)=\sum_{l \in K }  J(S_{-K}) - J(S)=0,$$ which contradicts the initial assumption.\\
		
		(4) Finally, we consider the case of ${J(S_{-K}) - J(S)>0}$ and ${\sum_{l \in K }  J(S_{-l}) - J(S)>0}$. This yields a positive lower bound on the supermodularity ratio. Note that if $J(S_{-l})$ is infeasible, so is $J(S_{-K})$. Hence, we can ignore such infeasible sets for computing a lower bound. Restricting our attention to the case of ${J(S_{-K}) - J(S)>0}$ and ${\sum_{l \in K }  J(S_{-l}) - J(S)>0}$, a lower bound on $\gamma_{\text{sup}}$ is given by 
		\begin{align*}
			\gamma_{\text{sup}}
			&= \min_{\substack{S,K\subseteq L\\ |K|\leq k_{\text{feas}}}} \dfrac{J(S_{-K}) - J(S)}{\sum_{l \in K }  J(S_{-l}) - J(S)}\\
			&\geq \min_{\substack{S,K\subseteq L\\ w\in K,\, |K|\leq k_{\text{feas}}}} \dfrac{J(S_{-K}) - J(S)}{|K|[J(S_{-w}) - J(S)]}\\
			&\geq \min_{\substack{S,K\subseteq L\\ w\in K,\, |K|\leq k_{\text{feas}}}} \dfrac{J(S_{-w}) - J(S)}{|K|[J(S_{-w}) - J(S)]}\\&= \dfrac{1}{k_{\text{feas}}}>0. 
		\end{align*}
		The equality follows from ignoring infeasible sets. The first inequality follows from $w$ yielding the maximum value for $J(S_{-l}) - J(S)$, over $l\in K$. The second inequality comes from $J(S_{-K})\geq J(S_{-w})$. Since ${\sum_{l \in K }  J(S_{-l}) - J(S)>0}$ and $w$ yields the maximum, we have $J(S_{-w}) - J(S)> 0$. We obtain the final equality. This concludes the proof.  \QEDA
\end{proof}

Theorem~\ref{thm:lowbnd} shows that any electricity market auction modeled by \eqref{eq:main_model} is weakly supermodular, and its supermodularity ratio is lower bounded by $1/k_{\text{feas}} \geq 1/|L|>0$. Furthermore, we obtain $\gamma_{\text{sup}}=1$ if the problem \eqref{eq:main_model} is infeasible whenever any two bidders are removed. This can be regarded as an alternative proof to \cite[Proposition~1]{karaca2017game}. However, the lower bound we derived is often conservative. In Section~\ref{sec:num}, we obtain larger supermodularity ratios from studies based on realistic electricity market instances. In the numerics, we also discuss a computationally efficient method for computing the supermodularity ratio.

It is worth mentioning the previous research on lower bounding the submodularity ratio. The work in \cite{elenberg2016restricted} lower bounds the submodularity ratio of an unconstrained optimization problem using restricted strong convexity and restricted smoothness of the objective function. These bounds on submodularity ratio can potentially carry over to the supermodularity ratio of an unconstrained problem. By~contrast, we consider a constrained problem, and hence their results do not extend to the supermodularity ratio of problem \eqref{eq:main_model}.

In summary, the VCG mechanism satisfies the approximate coalition-proofness property for the electricity markets modeled by~\eqref{eq:main_model}. Subsequently, we can provide bounds on the profitability of collusion and shill bidding. In addition, better approximate coalition-proofness properties are achieved if the market objective $J$ is close to being supermodular.

\section{Numerical Results} \label{sec:num}
Our goal is to show the effectiveness of the supermodularity ratio to predict collusion potential in electricity markets. We start by providing a simple example to show that the upper bound in Theorem~\ref{thm:approx_collusions} and the lower bound in Theorem~\ref{thm:lowbnd} are attained. Then, we consider IEEE test systems with DC power flow models. These markets do not satisfy the stringent conditions for supermodularity when line limits are present. We calculate the supermodularity ratios of these systems and show that they are close to~$1$. Consequently, we obtain tight bounds on collusion and shill bidding.

Before discussing the numerical results, we explain the computation method for the supermodularity ratio in Definition~\ref{def:suprat}.
To calculate $\gamma_{\text{sup}}$, we need to solve up to $2^{|L|}$ instances of the market problem~\eqref{eq:main_model} and then we could form the following linear program with up to $2^{2|L|}$ linear constraints,
\begin{equation}\label{eq:supratprob}
	\begin{split}
	\gamma_{\text{sup}}=&\argmax_{\gamma}\, \gamma\\
	&\ \ \,\mathrm{s.t.}\ \ \gamma\Big[\sum_{l \in K }  J(\mathcal{B}_{S\setminus{l}}) - J(\mathcal{B}_{S})\Big]\leq J(\mathcal{B}_{S\setminus{K}}) - J(\mathcal{B}_{S}),\\
	\end{split}
\end{equation}
 for all $K,S\subseteq L$.
 To deal with the exponential size, we use the constraint generation approach proposed in \cite{day2007fair}. The method proceeds as follows.  
 We initialize the algorithm by setting $\gamma_{\text{sup}}=1$. For this candidate solution, we can formulate another problem that finds the constraint in \eqref{eq:supratprob} with the largest violation, that is, the largest $\gamma_{\text{sup}}[\sum_{l \in K }  J(\mathcal{B}_{S\setminus{l}}) - J(\mathcal{B}_{S})]- [J(\mathcal{B}_{S\setminus{K}}) - J(\mathcal{B}_{S})].$ For this formulation, we refer to~\cite{karaca2017game}.
  We then obtain another candidate solution by finding the largest ratio that satisfies this violated constraint. The algorithm iterates between two problems and converges to the supermodularity ratio~\cite[Theorem~4.2]{day2007fair}.  This algorithm may still require the generation of all constraints, but in practice, it converges fast. Note that, this problem also needs to be solved under all bid profiles. We tackle this~by solving~\eqref{eq:supratprob} for many randomly generated bid profiles and setting the supermodularity ratio as the minimum of these~ratios. 

\subsection{A simple example for Theorems~\ref{thm:approx_collusions} and \ref{thm:lowbnd}}
	Let $\epsilon$ be a small positive number. Suppose the central operator has to procure $800$ MW of power supply from bidders $1$, $2$ and $3$ who have the true costs $\$600$ for $800$ MW, $\$300+\epsilon$ for $400$ MW and $\$300+\epsilon$ for $400$ MW, respectively. Under the VCG mechanism, the dominant-strategy Nash equilibrium is truthful bidding. Hence, assume all the bidders are truthful. As a result, bidder~$1$ wins and receives $p_1^{\text{VCG}} = 600 + (600+2\epsilon-600)= \$600+2\epsilon$. Invoking Theorem~\ref{thm:lowbnd}, we obtain the lower bound $\gamma_{\text{sup}}\geq 1/2$ since $k_{\text{feas}}=2$. In fact, this lower bound is tight. We can verify that the supermodularity ratio of this market is given by $\gamma_{\text{sup}}=1/2$ by evaluating the constraints in Definition~\ref{def:suprat} under any set of bid prices.
	
	Now, suppose bidders~$2$ and~$3$ collude and change their bids to $\$0$ for $400$ MW. Then, bidders $2$ and $3$ win and receive a payment of $\$600$ each. The total VCG utility they receive is $\$600-2\epsilon$. In this case, the upper bound in Theorem~\ref{thm:approx_collusions} is tight since  $[\gamma_{\text{sup}}^{-1}-1][J(\mathcal{C})-J(\mathcal{C}_{-K},\mathcal{B}_{K}^0)]=[1/(1/2)-1][600-0]=\$600$. For this example, we highlight that, for bidders~$2$ and~$3$, lowering bid prices to zero results in the largest profit these bidders can achieve by collusion. However, in general, it is not straightforward how to optimally collude since the bidders do not have full information on the market constraints and other bidders.

\subsection{IEEE test systems with DC power flow models}

In wholesale electricity markets, the central operator's problem involves a power grid model with network balance constraints. In this section, we adopt the DC power flow model~\cite{wu1996folk}. This model assumes lossless lines, constant bus voltages, and small phase angle differences. Under the LMP mechanism, the generators may act strategically to manipulate the payments in these markets~\cite{wolfram1997strategic,joskow2001quantitative}. The works in~\cite{xu2017efficient,karaca2017game} proposed the VCG mechanism as a way to handle this issue. However, these markets have quadratic bids and polytopic constraints. 
Even though the bid curves are convex, the polytopic constraint set of a DC power flow is not a polymatroid whenever the line limits are present. For this reason, the VCG mechanism can suffer from collusion and shill bidding \cite{karaca2017game}. 

Here, we consider the IEEE $14$-bus, $30$-bus, and $118$-bus test systems in \cite{christie2000power}.  In~\cite[\S 5.2]{karaca2017game}, we analyzed the VCG payments and how they compare with the payments for the LMP and the pay-as-bid mechanisms. In this work, we focus on the supermodularity ratio to quantify coalition-proofness of each system. To identify the collusion potential under any market instance, the supermodularity ratio computations are further verified by choosing the coefficients of the quadratic bids from a uniform distribution.

For the $14$-bus example, we have $10$ MW limits on lines exiting node~$1$ and connecting it to nodes~$2$ and~$5$. Even though the bids are convex, the polytopic constraint set is not a polymatroid and conditions for supermodularity do not hold. Invoking Theorem~\ref{thm:lowbnd} we obtain a lower bound $\gamma_{\text{sup}}\geq1/2$ since $k_{\text{feas}}=2$.  In contrast, the estimated supermodularity ratio, calculated using the constraint generation approach discussed above, is $0.76$. Next, we consider the bid curves in~\cite{christie2000power}. Since all bidders are allocated a positive quantity, we add two losing bidders at nodes~$1$ and~$5$. The supermodularity ratio indicates that losing bidder's collective profit from collusion would be upper bounded by $\$1296$, which is $11\%$ of the total VCG payment. Similar bounds are obtained for other possible colluders. We conclude that Theorem~\ref{thm:approx_collusions} provides us with bounds on strategic manipulations.

Similarly, the $30$-bus system also has polytopic constraints and the conditions for supermodularity do not hold. By contrast, we calculated the supermodularity ratio to be $1$. This verifies the exact coalition-proofness of the $30$-bus system under the VCG mechanism. This result can be explained in two ways. First, none of the line limits are tight at the optimal solution. Second, removing two bidders yields an infeasible problem in most cases.

In the case of the $118$-bus system,  the constraint set is a polymatroid because there are no line limits in the model~\cite{christie2000power}. It follows that the market objective is supermodular~\cite[Theorem~5]{karaca2017game}. Additionally, supermodularity ratio is calculated to be $1$, as is proven in Proposition~\ref{rem}-(ii). For the bid curves in~\cite{christie2000power}, there are two losing generators located at nodes $1$ and~$4$. Suppose these two losing generators form a coalition and lower their bids to zero. Then, their collective profit decreases from $\$0$ to $-\$439.8$. Invoking Theorem~\ref{thm:approx_collusions}, collusion is not profitable for any set of colluders who lose when bidding their true costs. As a final remark, we fix $50$~MW limits on two lines, one connecting nodes~$5$ and~$6$, another connecting nodes~$9$ and~$10$. Then, the supermodularity ratio is calculated to be~$0.92$. This shows that we can obtain weak supermodularity by introducing line limits, similar to the $14$-bus system.

\section{Conclusion}
For the electricity markets, the incentive-compatible VCG mechanism was susceptible to collusion and shill bidding since the market objective was in general not supermodular. Motivated by this, we defined the supermodularity ratio to quantify how close a set function is to being supermodular. The supermodularity ratio of the market objective provided us with bounds on the profitability of collusion and shill bidding. These bounds get tighter as the ratio increases. We then derived an analytical lower bound on the supermodularity ratio of the electricity markets. The results derived apply to general auctions run by the VCG mechanism. By quantifying coalition-proofness, we can evaluate the applicability of the VCG mechanism in terms of collusion and shill bidding.
Finally, we illustrated the tightness of the bounds on supermodularity ratio, and verified our results with case studies based on the IEEE test systems.

Our future work will address deriving bounds on collusion and shill bidding for double-sided auctions using the idea introduced here. 

\appendix
\section{Proof of Proposition~\ref{rem}}
\begin{proof}
	For a nonincreasing set function, we prove the statements in Proposition~\ref{rem}. 
	\begin{itemize}
		\item[(i)] Since the function is nonincreasing, we observe that both sides of the equation are always nonnegative in Definition~\ref{def:suprat}. Since $\gamma_{\text{sup}}$ is the largest scalar, we have $\gamma_{\text{sup}}\geq 0$. Furthermore, the inequality must hold when $K$ is a singleton. Then, we obtain $\gamma_{\text{sup}}\leq 1$.
		\item[(ii)] The set function is supermodular if and only if the supermodularity ratio is given by $\gamma_{\text{sup}}=1$. 
		
		($\Longrightarrow$) We first prove that supermodularity implies $\gamma_{\text{sup}}=1$. 
		Let $K=\{ l_1, \dots, l_k\}$. 
		Notice that, by supermodularity,
		$$
		J({S_{-{l_i}}}) 
		- J(S) 
		\leq 
		J({S_{-\{ l_i ,..., l_k \}} })
		-
		J({S_{-\{ l_{i+1} ,..., l_k \}}}). 
		$$
		Thus, we have 
		\begin{equation*}
		\begin{split}
		\sum_{l \in K }  J(S_{-l}) - J(S) &=\sum_{i=1}^{k}J(S_{-{l_i}}) - J(S)\\
		&\leq\sum_{i=1}^{k}J(S_{-{\{l_i,\ldots,l_k\}}}) - J(S_{-{\{l_{i+1},\ldots,l_k\}}})\\
		&= J(S_{-K}) - J(S).\\
		\end{split}
		\end{equation*}
		The first equality follows from the definition of the set~$K$. Supermodularity implies the inequality. The last equality holds by a telescoping sum. Finally, the same arguments can be made for all $K,S\subseteq L$.
		Since the supermodularity ratio is the largest scalar such that this inequality holds, we obtain $\gamma_{\text{sup}}\geq1$. Combining it with $\gamma_{\text{sup}}\leq 1$ from part~(i), we obtain $\gamma_{\text{sup}}=1$.
		
		($\Longleftarrow$) To prove that the supermodularity is also necessary for $\gamma_{\text{sup}}=1$, we proceed by contradiction.
		Suppose supermodularity does not hold. Then, there exist sets ${S}\subseteq{R}$, and $l\in S$ such that $J({R_{-l}})\! -\! J(R)\!>\!  J({S_{-l}})\! -\! J(S)$. First, we show that without loss of generality, we can restrict $R$ to differ from $S$ by one bidder, that is, $R= S\cup \{i\}$ for some $i$. We take $S^0=S$ and
		$S^j=S^{j-1}\cup \{ l_j\}$ with $S^k=R$, then,
		\begin{align*}
		\sum_{j =1}^k J({S^j_{-l}})-J({S^{j-1}_{-l}}) &= J({R_{-l}})-J({S_{-l}})\\
		&>  J(R)-J(S)\\ &= \sum_{j=1}^k J({S^j})-J({S^{j-1}}).
		\end{align*}
		The strict inequality above must hold for one of the summands, that is, $\exists j,\,J({S^j_{-l}})-J({S^{j-1}_{-l}}) > J({S^j})-J({S^{j-1}})$.  Equivalently, $\exists j,\,J({S^j_{-l}})-J({S^j})>J({S^{j-1}_{-l}})-J({S^{j-1}})$. Hence, we can consider sets ${S} \subseteq {R}$ that differ only by bidder~$i$. By this observation, we have
		\begin{equation}\label{eq:side}
		\begin{split} J({{R}_{-l}}) - J({R}) &> J({{S}_{-l}}) - J({S})\\ &= J({{R}_{-{\{i,l\}}}}) - J({{R}_{-i}})\\& \geq 0,\end{split}
		\end{equation} for $i \in {R}\setminus {S}$.
		Considering the set ${R}$, and ${K}=\{i, {l}\}$, we have: 
		{
			\begin{align*}
			\sum_{l' \in {K}}\!\! J({{R}_{-l'}}) - J({R})
			=&
			J({{R}_{-{l}}}) - J({{R}}) + J({{R}_{-i}}) - J({R}) \\ 
			>& 
			J({{R}_{-{\{i,l\}}}}) - J({{R}_{-i}}) + J({{R}_{-i}}) - J({R}) 
			\\
			=& J({{R}_{-K}}) - J({R}).
			\end{align*}
			The strict inequality follows from \eqref{eq:side}.
			If the supermodularity ratio was $\gamma_{\text{sup}}=1$, then the inequality in Definition~\ref{def:suprat}  would be violated for this particular choice of $R,K\subseteq L$. Hence, the supermodularity ratio is $\gamma_{\text{sup}}<1$. Thus, we conclude that the supermodularity condition is also necessary for $\gamma_{\text{sup}}=1$.
		}
		
	\end{itemize}
	This concludes the proof of Proposition~\ref{rem}. \QEDA
\end{proof}

\bibliographystyle{IEEEtran}
\bibliography{IEEEabrv,library}
\end{document}